\def \hh{\vskip0.5\baselineskip \hbox to \hsize}
\newtheorem{Lemma}{Lemma}
\newtheorem{Theorem}{Theorem}
\numberwithin{equation}{section}
\begin{document}

\title{Symmetries and Lie algebra of the differential-difference
Kadomstev-Petviashvili hierarchy}

\vskip 10pt

\author{ Xian-long Sun\footnote{E-mail address: xlongs@yahoo.cn},~~Da-jun
Zhang\footnote{Corresponding author. E-mail address:
djzhang@staff.shu.edu.cn},~~Xiao-ying Zhu,~~Deng-yuan Chen
\\
{\small \it Department of Mathematics, Shanghai University,
Shanghai 200444, P.R.China} }
\maketitle

\begin{abstract}

By introducing suitable non-isospectral flows we construct two sets
of symmetries for the isospectral differential-difference
Kadomstev-Petviashvili hierarchy. The symmetries form an infinite
dimensional Lie algebra.

\noindent{\bf{Keywords:}}   non-isospectral flows,
the differential-difference Kadomstev-Petviashvili equation,
symmetries, Lie algebra.
\end{abstract}

\section{Introduction}

Searching for symmetries and Lie algebraic structure is an important
and interesting topic in integrable systems\cite{Fokas-87}. Variety
of  methods have been developed  to obtain infinitely many
symmetries and their Lie algebraic structures for Lax integrable
systems\cite{Fuch-83}-\cite{CHEN-CHAOS-2003}, for both
(1+1)-dimensional and high-dimensional cases.
One of efficient ways is to use Lax representation of isospectral
and non-isospectral flows (cf.
\cite{Ma-90}-\cite{Ma-Tamizhimani-JPSJ-1999},\cite{Chen-91,CHEN-JMP-1996}),
rather than recursion operators, and this approach has been extended
to high dimensional continuous integrable
systems\cite{CHEN-CHAOS-2003}.

In this paper, we consider the symmetries and their Lie algebra for
the differential-difference Kadomstev-Petviashvili (D$\Delta$KP)
hierarchy by means of Lax representation approach.
The (isospectral)
D$\Delta$KP hierarchy is derived following the basic frame of Sato's
theory starting from a quasi-difference operator
\cite{S-1997,S-DOC-1998}. However, due to the lack of a neat form of
discrete derivatives, the non-isospectral flows do not have regular
asymptotic properties as the continuous non-isospectral flows do. We
have to choose suitable time evolution for spectral parameter so
that we can get suitable non-isospectral flows which can be used to
construct symmetries.


This paper is organized as follows. In Sec.2, we construct isospectral and -non-isospectral
 D$\Delta$KP hierarchies from a quasi-difference operator. In Sec.3,
we construct two sets of symmetries and their Lie algebra for the isospectral D$\Delta$KP hierarchy.


\section{The isospectral and non-isospectral D$\Delta$KP hierarchies}

~~~~~Let us consider the difference analogy of a quasi-differential
operator\cite{S-1997}
\begin{equation}
L=\Delta+u_{0}+u_{1}\Delta^{-1}+\cdots+u_{j}\Delta^{-j}+\cdots,\label{L}
\end{equation}
where $u_s=:u_s(n,t)=u_s(n,t_{1},t_{2},\cdots)\
(s=0,1,2,\ldots)$, $t=(t_1,t_2,\cdots)$, each $u_s$ varnishes rapidly when $|n|\to\infty$, $\Delta$ denotes the forward difference operator
defined by $\Delta f(n)=(E-1)f(n)=f(n+1)-f(n)$ and the shift
operator $E$ defined by $Ef(n)=f(n+1)$. The operators $\Delta$ and
$E$ are connected by $\Delta=E-1$ and
$\Delta\Delta^{-1}=\Delta^{-1}\Delta=1$. Obviously, the
$r^{th}$-power of $L$ can be expressed as
\begin{subequations}
\begin{equation}
L^{r}=\sum\limits_{j\leq r}p_{r,j}(u)\Delta^{j}.\label{a-1}
\end{equation}
where the coefficients $p_{r,j}(u)$ are uniquely determined by the
coordinates $u_{j}\ (j=0,1,2,\ldots)$ and their differences. Here by
$u$ we denote $(u_0,u_1,\cdots)^T$. $L^{r}$ can be seperated into
\begin{equation}
(L^{r})_{+}=\sum\limits_{j=0}^{n}p_{r,j}(u)\Delta^{j}, \ \
(L^{r})_{-}=L^{r}-(L^{r})_{+},
\end{equation}
\end{subequations}
where $( ~)_{+}$ denotes the nonnegative part of $\Delta$ and $(
~)_{-}$ the residual part.

In general, the isospectral flows can be obtained from the
compatibility of
\begin{subequations}
\begin{eqnarray}
L\phi=\eta\phi,\\
\phi_{t_{s}}=A_{s}\phi;
\end{eqnarray}
\end{subequations}
i.e
\begin{subequations}
\begin{equation}
L_{t_{s}}=[A_{s},L],
\label{Lax-eq}
\end{equation}
where $\eta_{t_{s}}=0$, $A_{s}=(L^{s})_{+}$ and obviously $A_{s}$ satisfies
the boundary condition
\begin{equation}
A_{s}|_{u=0}=\Delta^{s}.
\end{equation}
\end{subequations}
The first few explicit forms of $A_{s}$ and equations given by the
Lax equation \eqref{Lax-eq} are\cite{S-DOC-1998}
\begin{subequations}\label{2.5}
\begin{eqnarray}
A_{1}&=&\Delta+u_{0},\label{A1}\\
A_{2}&=&\Delta^{2}+({\Delta}u_{0}+2u_{0})\Delta+({\Delta}u_{0}+u_{0}^{2}+{\Delta}u_{1}+2u_{1}),\label{A2}\\
A_{3}&=&\Delta^{3}+a_{1}\Delta^{2}+a_{2}\Delta+a_{3},\label{A3}\\
&&\cdots,\nonumber
\end{eqnarray}
\end{subequations}
with
\begin{subequations}
\begin{eqnarray}
a_{1}&=&\Delta^{2}u_{0}+3{\Delta}u_{0}+3u_{0},\\
a_{2}&=&2\Delta^{2}u_{0}+3{\Delta}u_{0}+3u_{0}^{2}+3u_{0}{\Delta}u_{0}
+({\Delta}u_{0})^{2}+3u_{1}+3{\Delta}u_{1}+\Delta^{2}u_{1},\\
a_{3}&=&\Delta^{2}u_{0}+5u_{0}u_{1}+3u_{0}{\Delta}u_{0}+u_{0}^{3}
+({\Delta}u_{0})^{2}+{\Delta}u_{0}{\Delta}u_{1}+3u_{0}{\Delta}u_{1}\nonumber\\
&&+u_{1}{\Delta}u_{0}+u_{1}E^{-1}u_{0}+2\Delta^{2}u_{1}+3{\Delta}u_{1}+3u_{2}+3{\Delta}u_{2}+\Delta^{2}u_{2};
\end{eqnarray}
\end{subequations}
\begin{subequations}\label{2.7}
\begin{eqnarray}
u_{0,t_{1}}&=&q_{10}={\Delta}u_{1},\label{q10}\\
u_{1,t_{1}}&=&q_{11}={\Delta}u_{1}+{\Delta}u_{2}+u_{0}u_{1}-u_{1}E^{-1}u_{0},\label{q11}\\
u_{2,t_{1}}&=&q_{12}={\Delta}u_{3}+{\Delta}u_{2}+u_{0}u_{2}+u_{1}E^{-1}u_{0}-u_{2}E^{-2}u_{0}-u_{1}E^{-2}u_{0},\label{q12}\\
&&\cdots;\nonumber
\end{eqnarray}
\end{subequations}
\begin{subequations}\label{2.8}
\begin{eqnarray}
u_{0,t_{2}}=q_{20}\label{u0-2}
&=&\Delta^{2}u_{1}+2{\Delta}u_{2}+\Delta^{2}u_{2}+u_{1}{\Delta}u_{0}+2u_{0}{\Delta}u_{1}\nonumber\\
&&+({\Delta}u_{0}){\Delta}u_{1}+u_{0}u_{1}-u_{1}E^{-1}u_{0},\\
u_{1,t_{2}}=q_{21}\label{u1-2}
&=&{\Delta}^{2}u_{1}+2{\Delta}u_{2}+2{\Delta}^{2}u_{2}+2{\Delta}u_{3}+{\Delta}^{2}u_{3}+2u_{0}{\Delta}u_{1}+{\Delta}u_{0}{\Delta}u_{1}\nonumber \\
&&+2u_{0}u_{2}+u_{2}{\Delta}u_{0}+2u_{0}{\Delta}u_{2}+{\Delta}u_{0}{\Delta}u_{2}+u_{1}{\Delta}u_{0}+u_{1}u_{0}^{2}+u_{1}^{2} \nonumber\\
&&+u_{1}{\Delta}u_{1}-u_{1}E^{-2}u_{0}+u_{1}E^{-1}u_{0}-u_{1}E^{-1}u_{1}-u_{2}E^{-1}u_{0}\nonumber\\
&&-u_{2}E^{-2}u_{0}-u_{1}E^{-1}u_{0}^{2},\\
&&\cdots.\nonumber
\end{eqnarray}
\end{subequations}

From \eqref{2.7}, we obtain
\begin{subequations}\label{2.9}
\begin{eqnarray}
u_{1}&=&\Delta^{-1}\frac{\partial{u_{0}}}{\partial{t_{1}}},\label{u-1}\\
u_{2}&=&\Delta^{-2}\frac{\partial^{2}{u_{0}}}{\partial{t_{1}^{2}}}-\Delta^{-1}\frac{\partial{u_{0}}}{\partial{t_{1}}}-E^{-1}u_{0}\Delta^{-1}\frac{\partial{u_{0}}}
{\partial{t_{1}}}+\Delta^{-1}(u_{0}\frac{\partial{u_{0}}}{\partial{t_{1}}}),\label{u-2}\\
&&\cdots.\nonumber
\end{eqnarray}
\end{subequations}
Eliminating $u_{1},u_{2}, \cdots$ from \eqref{q10}, \eqref{u0-2},
$\cdots$, one can obtain ($u_{0}=u,t_{1}=y$)\cite{S-1997,S-DOC-1998}
\begin{subequations}\label{2.10}
\begin{eqnarray}
u_{t_{1}}=K_{1}&=&u_{y},\label{k1}\\
u_{t_{2}}=K_{2}&=&(1+2\Delta^{-1})u_{yy}-2u_{y}+2uu_{y},\label{ddkp}\\
&&\cdots,\nonumber
\end{eqnarray}
\end{subequations}
%
which are isospectral D$\Delta$KP hierarchy where Eq.\eqref{ddkp} is
the well-known D$\Delta$KP equation.

To get the $\tau$-symmetries we need to introduce the
non-isospectral D$\Delta$KP hierarchy. In this case, we
set\footnote{ One may wonder that \eqref{eta-t} is a linear
combination and so is the non-isospectral flow $\sigma_r$. Actually,
in the Lax representation approach we need $\sigma_{r}|_{u=0}=0$.
Suppose that we start from a general form
\begin{equation*}
\eta_{t_{r}}=a\eta^{\alpha}+b\eta^{\beta},
\label{eta-t-1}
\end{equation*}
with constants $a,b$ and integers $\alpha,\beta$.
Then the Lax equation is
\begin{equation*}
L_{t_{r}}=[B_{r},L]+aL^{\alpha}+bL^{\beta}.\label{non-lax-equa-1}
\end{equation*}
Noting that $L|_{u=0}=\Delta$ and \eqref{Br-0} the r.h.s. of the above equation becomes
$$-\Delta^{r}-\Delta^{r-1}+a\Delta^{\alpha }+b\Delta^{\beta}$$
when $u=0$,
which means we have to take $a=b=1, \alpha=r,\beta=r-1$ so that it vanishes.
Hence we need the time evolution \eqref{eta-t}.}
\begin{equation}
\eta_{t_{r}}=\eta^{r}+\eta^{r-1}.
\label{eta-t}
\end{equation}
Then the Lax equation turns out to be
\begin{subequations}
\begin{equation}
L_{t_{r}}=[B_{r},L]+L^{r}+L^{r-1},\label{non-lax-equa}
\end{equation}
where
\begin{equation}
B_{r}=b_{0}\Delta^{r}+b_{1}\Delta^{r-1}+\cdots+b_{r}, \
(r>0)\label{Br}
\end{equation}
in which $b_{i}~ (i=0,1,2,\cdots,r)$ are undetermined functions of
coordinates $u_{j}~(j=0,1,2,\ldots)$ and their differences. $B_{r}$
is imposed the boundary condition
\begin{equation}
B_{r}|_{u=0}=t_{1}\Delta^{r}+n\Delta^{r-1}, \label{Br-0}
\end{equation}
\end{subequations}
and then the both sides of the Lax equation \eqref{non-lax-equa} go
to zero when $u\to 0$.

The first few of $B_{r}$ and equations given by \eqref{non-lax-equa}
are
\begin{subequations}
\begin{eqnarray}
B_{1}&=&t_{1}A_{1}+n,\\
B_{2}&=&t_{1}A_{2}+n\Delta+nu_{0}+\Delta^{-1}u_{0},\\
B_{3}&=&t_{1}A_{3}+n\Delta^{2}+(2nu_{0}+n{\Delta}u_{0}+\Delta^{-1}u_{0})\Delta+u_{0}\Delta^{-1}u_{0}+2nu_{1}
+n{\Delta}u_{0}\nonumber\\
&&+n{\Delta}u_{1}-2u_{1}-{\Delta}u_{1}+nu_{0}^{2}-u_{0}^{2}+\Delta^{-1}(u_{1}-u_{0}+u_{0}^{2}),\\
&&\cdots;\nonumber
\end{eqnarray}
\end{subequations}
\begin{subequations}
\begin{eqnarray}
u_{0,t_{1}}&=&t_{1}q_{10}+u_{0},\label{non-u01}\\
u_{1,t_{1}}&=&t_{1}q_{11}+2u_{1},\\
u_{2,t_{1}}&=&t_{1}q_{12}+u_{1}+3u_{2},\\
&&\cdots;\nonumber
\end{eqnarray}
\end{subequations}
\begin{subequations}
\begin{eqnarray}
u_{0,t_{2}}&=&t_{1}q_{20}+n{\Delta}u_{1}+u_{0}^{2}-u_{0}+3u_{1}+{\Delta}u_{1},\label{non-u02}\\
u_{1,t_{2}}&=&t_{1}q_{21}+n{\Delta}u_{1}+n{\Delta}u_{2}+(n+1)u_{0}u_{1}
+u_{1}{\Delta}^{-1}u_{0}+2u_{1}+{\Delta}u_{2}\nonumber\\
&&+3u_{2}+(2-n)u_{1}E^{-1}u_{0}-u_{1}E^{-1}\Delta^{-1}u_{0}+{\Delta}u_{1},\\
&&\cdots;\nonumber
\end{eqnarray}\end{subequations}
\begin{eqnarray}
u_{0,t_{3}}&=&t_{1}q_{30}-{\Delta}^{2}u_{0}+nu_{0}{\Delta}u_{1}+n\Delta(u_{0}u_{1})+\Delta^{-1}u_{0}{\Delta}u_{1}
+nu_{0}u_{1}+n\Delta^{2}u_{1}\nonumber\\
&&+u_{1}\Delta^{-1}u_{0}-u_{0}{\Delta}u_{0}-4n{\Delta}u_{1}-2u_{1}-{\Delta}u_{0}-{\Delta}u_{1}+u_{0}-3u_{0}^{2}\nonumber\\
&&-nu_{1}E^{-1}u_{0}+u_{1}E^{-1}u_{0}+u_{0}u_{1}-u_{1}E^{-1}\Delta^{-1}u_{0}+n\Delta^{2}u_{2}\nonumber\\
&&+2{\Delta}u_{2}+2u_{2},\label{non-u03}\\
&&\cdots.\nonumber
\end{eqnarray}

Here $A_{l}$ and $q_{ij}$ are described by \eqref{2.5}, \eqref{2.7}
and \eqref{2.8} respectively.

 Then substituting \eqref{2.9} with $t_{1}=y$ into
 \eqref{non-u01}, \eqref{non-u02} and \eqref{non-u03}  yields($u_{0}=u$)
\begin{subequations}
\begin{eqnarray}
u_{t_{1}}=\sigma_{1}&=&yK_{1}+u,\\
u_{t_{2}}=\sigma_{2}&=&yK_{2}+(1+n)u_{y}+3\Delta^{-1}u_{y}+u^{2}-u,\\
&&\cdots,\nonumber
\end{eqnarray}
\end{subequations}
in which, $K_{s}$ are given by \eqref{2.10}. These equations
constitute the non-isospectral hierarchy of the D$\Delta$KP system.

The obtained isospectral and non-isospectral D$\Delta$KP hierarchies
can be expressed through Lax equations in the following form
\begin{subequations}\label{2.19}
\begin{eqnarray}
L'[K_{s}]&=&[A_{s},L],\label{Lax-exp11}\\
 A_{s}|_{u=0}&=&\Delta^{s};\label{Lax-exp12}
\end{eqnarray}
\end{subequations}
\begin{subequations}\label{2.20}
\begin{eqnarray}
L'[\sigma_{r}]&=&[B_{r},L]+L^{r}+L^{r-1},\label{Lax-exp21}\\
B_{r}|_{u=0}&=&t_{1}\Delta^{r}+n\Delta^{r-1},\label{Lax-exp22}
\end{eqnarray}
\end{subequations}
which we call Lax representations of flows.

\section{Lie algebra structure of the D$\Delta$KP system}
~~~~In this section, we begin with a discussion of Gateaux
derivative concerning the quasi-difference operator. Let
$\partial_{t_{j}}=\frac{\partial}{{\partial}t_{j}}$ and $\mathcal
{F}$ denote a linear space constructed by all real functions
$f=f(u)$ depending on $n,t$ and derivatives and differences of $u$.
 $f(u)$ is  $C^{\infty}$ differentiable w.r.t. $t$
and $n$, and vanishes rapidly when $|n|\to\infty$.
The Gateaux derivative of $f(u)\in\mathcal{F}$ in direction
$h\in\mathcal{F}$ w.r.t. $u$ is defined as
\begin{equation}
f'[h]=\frac{d}{d\varepsilon}f(u+{\varepsilon}h)|_{\varepsilon=0},
\end{equation}
from which, $\mathcal{F}$ forms a Lie algebra according to the
following Gateaux commutator
\begin{equation}
{\llbracket}f,g{\rrbracket}=f'[g]-g'[f],
\end{equation}
where $f,g\in\mathcal{F}$. For a quasi-difference operator
\begin{equation}
P(u)=\sum\limits_{j\leq{s}}p_{j}(u)\Delta^{j},
\end{equation}
its Gateaux derivative in direction $h$ with
respect to $u$ is defined by
\begin{equation}
P'[h]=\sum\limits_{j\leq{s}}p_{j}'[h]\Delta^{j}.
\end{equation}

Besides, using
\begin{equation}
(p_{j}')'[f]g=(p_{j}')'[g]f,
\end{equation}
one can get\cite{Olver-book}
\begin{equation}
(P'[f])'[g]-(P'[g])'[f]=P'[{\llbracket}f,g{\rrbracket}].\label{com-relation}
\end{equation}
In addition, it is easy to prove the following lemma.
\begin{Lemma}
For the quasi-difference operator $L$ defined in \eqref{L}, $B$ in
the form \eqref{Br} and $X\in\mathcal {F}$, the equation
\begin{equation}
L'[X]=[B,L], ~~B|_{u=0}=0\label{0-equa}
\end{equation}
only admits zero solution $X=0,B=0$.
\end{Lemma}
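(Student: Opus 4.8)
The plan is to exploit the graded structure of the equation $L'[X]=[B,L]$ by comparing coefficients of powers of $\Delta$, working from the top degree downward. Write $X=(X_0,X_1,\ldots)^T$ so that $L'[X]=X_0+X_1\Delta^{-1}+X_2\Delta^{-2}+\cdots$, a purely nonpositive-degree operator, while $B=b_0\Delta^r+b_1\Delta^{r-1}+\cdots+b_r$ has top degree $r$. The key observation is that the commutator $[B,L]$ generically contains terms of positive degree in $\Delta$, whereas the left-hand side $L'[X]$ has degree at most $0$. First I would compute the leading behaviour of $[B,L]$: since $L=\Delta+u_0+u_1\Delta^{-1}+\cdots$, the highest-degree contribution to $[B,L]$ comes from the commutator of $b_0\Delta^r$ with $\Delta$ and with the $u_s\Delta^{-s}$ tail, together with the action of lower $b_i$. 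Matching the coefficient of $\Delta^r$ forces a relation on $b_0$; matching $\Delta^{r-1}$ then constrains $b_1$, and so on.

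The second step is to fold in the boundary condition $B|_{u=0}=0$ decisively. Evaluating everything at $u=0$ gives $L|_{u=0}=\Delta$, so $[B,L]|_{u=0}=[B|_{u=0},\Delta]=0$ and $L'[X]|_{u=0}=X_0|_{u=0}+X_1|_{u=0}\Delta^{-1}+\cdots$. Because each $X_s\in\mathcal{F}$ vanishes when $u=0$ (the functions in $\mathcal{F}$ vanish as $u\to 0$), this is automatically consistent at $u=0$ but the real leverage comes from differentiating: the condition $B|_{u=0}=0$ pins down the constant-in-$u$ part of every $b_i$ to be zero, which combined with the degree-matching relations I would set up recursively shows that each $b_i$ must vanish identically. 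Concretely, I would argue by descending induction on the degree index $i$: the top relation shows $b_0$ is a constant (independent of $u$) and hence $b_0=0$ by the boundary condition; feeding $b_0=0$ into the next relation shows $b_1$ is constant, hence $b_1=0$; iterating through $b_r$ yields $B=0$.

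Once $B=0$ is established, the equation collapses to $L'[X]=0$, i.e. $X_s=0$ for every $s$, whence $X=0$. The mechanism here is that $L'[X]=\sum_{s\ge 0}X_s\Delta^{-s}$ as an operator identity, and an operator that is identically zero must have every coefficient zero; since the map $X\mapsto L'[X]$ simply reads off $X$ into the coefficients (the Gateaux derivative of $L$ in direction $X$ replaces each $u_s$ by $X_s$), vanishing of $L'[X]$ is equivalent to $X=0$.

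The main obstacle I expect is the bookkeeping in the descending induction, specifically controlling the ``mixing'' terms where a high-degree $b_i\Delta^{r-i}$ commutes with the infinite tail $u_j\Delta^{-j}$ of $L$ and contributes to the same power of $\Delta$ as a lower $b_{i'}$ commuting with the $\Delta$ part. These cross terms mean the recursion for $b_i$ is not perfectly triangular at first glance, so I would need to verify that at each stage the already-vanishing lower-index $b$'s kill the problematic contributions, keeping the induction clean. This is the step that makes the boundary condition $B|_{u=0}=0$ essential rather than cosmetic: without it one could only conclude the $b_i$ are $u$-independent constants, and the lemma would fail.
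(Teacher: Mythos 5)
The paper never actually proves this lemma---it is asserted to be ``easy to prove''---so there is no in-text argument to compare against; judged on its own, your degree-counting strategy is the right one and is essentially complete, up to two points that deserve tightening. First, the top-degree term of $[B,L]$ sits at $\Delta^{r+1}$, not $\Delta^{r}$: since $\Delta\circ b_{0}=(Eb_{0})\Delta+(\Delta b_{0})$ as operators, one finds $[b_{0}\Delta^{r},\Delta]=-(\Delta b_{0})(\Delta^{r+1}+\Delta^{r})$, so equating the coefficient of $\Delta^{r+1-k}$ to zero for $k=0,1,\dots,r$ yields $-\Delta b_{k}+(\text{terms containing }b_{0},\dots,b_{k-1})=0$. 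Your worry about the ``mixing'' terms dissolves here: $[b_{i}\Delta^{r-i},u_{j}\Delta^{-j}]$ has degree at most $r-i-j$, so only $b_{i}$ with $i<k$ can contribute to the coefficient of $\Delta^{r+1-k}$ alongside $-\Delta b_{k}$, and these are already zero by the descending induction---the recursion is genuinely triangular. Second, the relation $\Delta b_{k}=0$ says that $b_{k}$ is independent of $n$, not of $u$; to convert this into $b_{k}\equiv 0$ you must invoke the standing hypothesis that the $u_{j}$ and their differences vanish rapidly as $|n|\to\infty$, so that the $n$-independent quantity $b_{k}$ equals its limiting value $b_{k}|_{u=0}$, which the boundary condition kills. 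This is exactly where both the decay assumption and $B|_{u=0}=0$ do their work, as you correctly anticipated. Finally, the last step is cleanest if you observe that the coefficient of $\Delta^{0}$ in $L'[X]$ is $u_{0}'[X]=X$ itself, so $B=0$ and hence $L'[X]=0$ force $X=0$ immediately, without needing to discuss the lower coefficients $u_{j}'[X]$.
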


Then, from the Lax representations  \eqref{2.19}-\eqref{2.20}, we
have the following property.
\begin{Theorem}
Suppose that
\begin{subequations}
\begin{eqnarray}
{\langle}A_{s},A_{r}{\rangle}&=&A_{s}'[K_{r}]-A_{r}'[K_{s}]+[A_{s},A_{r}],\\
{\langle}A_{s},B_{r}{\rangle}&=&A_{s}'[\sigma_{r}]-B_{r}'[K_{s}]+[A_{s},B_{r}],\\
{\langle}B_{s},B_{r}{\rangle}&=&B_{s}'[\sigma_{r}]-B_{r}'[\sigma_{s}]+[B_{s},B_{r}],\label{3.8c}
\end{eqnarray}
\end{subequations}
then we have
\begin{subequations}
\begin{eqnarray}
L'[\llbracket{K_{s},K_{r}}\rrbracket]&=&[{\langle}A_{s},A_{r}\rangle,L],\label{thr1-a}\\
L'[\llbracket{K_{s},\sigma_{r}}\rrbracket]&=&[{\langle}A_{s},B_{r}\rangle,L],\label{thr1-b}\\
L'[\llbracket{\sigma_{s},\sigma_{r}}\rrbracket]&=&[{\langle}B_{s},B_{r}\rangle,L]+(s-r)L^{s+r-1}+2(s-r)L^{s+r-2}+(s-r)L^{s+r-3},\label{thr1-c}
\end{eqnarray}
\end{subequations}
and
\begin{subequations}
\begin{eqnarray}
{\langle}A_{s},A_{r}\rangle|_{u=0}&=&0,\label{thr1-bon-a}\\
{\langle}A_{s},B_{r}\rangle|_{u=0}&=&s\Delta^{s+r-1}+s\Delta^{s+r-2},\label{thr1-bon-b}\\
{\langle}B_{s},B_{r}\rangle|_{u=0}&=&(s-r)\bigg[t_{1}\Delta^{s+r-1}+(t_{1}+n)\Delta^{s+r-2}+n\Delta^{s+r-3}\bigg].\label{thr1-bon-c}
\end{eqnarray}
\end{subequations}
\end{Theorem}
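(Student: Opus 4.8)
The plan is to differentiate the Lax representations \eqref{2.19} and \eqref{2.20} along the various flows, subtract the two symmetrised versions, and then collapse the resulting nested commutators with the Jacobi identity while converting the left-hand sides into a single Gateaux derivative via \eqref{com-relation}. For the isospectral pair I would apply $(\,\cdot\,)'[K_r]$ to $L'[K_s]=[A_s,L]$ and $(\,\cdot\,)'[K_s]$ to $L'[K_r]=[A_r,L]$. The two left-hand sides combine, by \eqref{com-relation} with $P=L$, into $L'[\llbracket K_s,K_r\rrbracket]$; on the right the pieces $[A_s'[K_r],L]-[A_r'[K_s],L]$ assemble the Gateaux part of $\langle A_s,A_r\rangle$, while the nested commutators $[A_s,[A_r,L]]-[A_r,[A_s,L]]$ reduce to $[[A_s,A_r],L]$ by Jacobi, which gives \eqref{thr1-a}.

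A preliminary step that I would isolate as a short computation is a Leibniz-type rule for the Gateaux derivative of a power of $L$ along a flow. Writing $(L^m)'[h]=\sum_{i=0}^{m-1}L^i\,(L'[h])\,L^{m-1-i}$ and inserting the Lax equations, the $[A_s,L]$-type contribution telescopes to a clean commutator, so that $(L^m)'[K_s]=[A_s,L^m]$, whereas for a non-isospectral direction the inhomogeneous part $L^r+L^{r-1}$ commutes with $L$ and survives the summation, yielding $(L^m)'[\sigma_r]=[B_r,L^m]+m(L^{m+r-1}+L^{m+r-2})$. These two identities are what drive the mixed and the non-isospectral cases.

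With this in hand, the mixed identity \eqref{thr1-b} follows as before, except that the extra pieces $[A_s,L^r]+[A_s,L^{r-1}]$ produced by differentiating $L'[\sigma_r]$ cancel against $(L^r)'[K_s]+(L^{r-1})'[K_s]=[A_s,L^r]+[A_s,L^{r-1}]$ produced by differentiating $L'[K_s]$, leaving only $[\langle A_s,B_r\rangle,L]$. For the non-isospectral pair \eqref{thr1-c} the same mechanism removes the cross terms $[B_r,L^s],[B_s,L^r]$ and their $m-1$ analogues, but the diagonal power-sums from the Leibniz rule do not cancel: collecting $sL^{s+r-1}+(2s-1)L^{s+r-2}+(s-1)L^{s+r-3}$ from the $s$-version and subtracting the $s\leftrightarrow r$ version leaves $(s-r)L^{s+r-1}+2(s-r)L^{s+r-2}+(s-r)L^{s+r-3}$, precisely the inhomogeneous tail on the right of \eqref{thr1-c}. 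Tracking the coefficients of these surviving powers is the most error-prone part of the argument and is where I would be most careful.

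Finally, for the boundary values I would evaluate each bracket at $u=0$. Inspection of the hierarchies shows $K_s|_{u=0}=0$ and $\sigma_r|_{u=0}=0$, and since a Gateaux derivative is linear in its direction, every term $A_s'[K_r],\,B_r'[K_s],\,B_s'[\sigma_r]$ vanishes there; thus only the operator commutators survive, evaluated with $A_s|_{u=0}=\Delta^s$ and $B_r|_{u=0}=t_1\Delta^r+n\Delta^{r-1}$. The one non-trivial ingredient is the commutator of $\Delta$ with the multiplication operator $n$: from $\Delta(nf)=n\Delta f+Ef$ one gets $[\Delta,n]=E=\Delta+1$ and hence $[\Delta^s,n]=s\Delta^s+s\Delta^{s-1}$. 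Substituting this yields $[A_s,A_r]|_{u=0}=0$ and $[A_s,B_r]|_{u=0}=s\Delta^{s+r-1}+s\Delta^{s+r-2}$, and, after combining the contributions of $[t_1\Delta^s,n\Delta^{r-1}]$, $[n\Delta^{s-1},t_1\Delta^r]$ and $[n\Delta^{s-1},n\Delta^{r-1}]$, the symmetric form $(s-r)[t_1\Delta^{s+r-1}+(t_1+n)\Delta^{s+r-2}+n\Delta^{s+r-3}]$. This establishes \eqref{thr1-bon-a}, \eqref{thr1-bon-b} and \eqref{thr1-bon-c}. The main obstacle throughout is organizational: pairing the telescoped commutators and the inhomogeneous power-sums correctly so that the intended cancellations actually occur.
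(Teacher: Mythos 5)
Your proposal is correct and follows essentially the same route as the paper: cross-differentiating the Lax representations, using the power rule $(L^m)'[\sigma_r]=[B_r,L^m]+m(L^{m+r-1}+L^{m+r-2})$, the Jacobi identity, and \eqref{com-relation}, then evaluating at $u=0$ where $K_s$ and $\sigma_r$ vanish. Your coefficient bookkeeping for the inhomogeneous tail and the explicit computation $[\Delta^s,n]=s\Delta^s+s\Delta^{s-1}$ both check out and merely supply details the paper leaves implicit.
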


\begin{proof}
We only prove the equalities \eqref{thr1-c} and
\eqref{thr1-bon-c}, the others can be obtained in a similar way.

Taking the Gateaux derivative of \eqref{Lax-exp21} in the direction
$\sigma_{s}$ with respect to $u$, and noting
\begin{equation}
L^{r'}[\sigma_{s}]=[B_{s},L^{r}]+rL^{s+r-1}+rL^{s+r-2}
\end{equation}
and
\begin{equation}
[[B_{s},B_{r}],L]=[B_{s},[B_{r},L]]-[B_{r},[B_{s},L]],
\end{equation}
we have
\begin{equation}
\begin{split}
(L'[\sigma_{r}])'[\sigma_{s}]=&[B'_{r}[\sigma_{s}],L]+[B_{r},[B_{s},L]]+[B_{r},L^{s}]+[B_{r},L^{s-1}]+[B_{s},L^{r}]+rL^{s+r-1}\\
&+rL^{s+r-2}+[B_{s},L^{r-1}]+(r-1)L^{s+r-2}+(r-1)L^{s+r-3}.\label{prov-1}
\end{split}
\end{equation}
Similarly,
\begin{equation}
\begin{split}
(L'[\sigma_{s}])'[\sigma_{r}]=&[B'_{s}[\sigma_{r}],L]+[B_{s},[B_{r},L]]+[B_{s},L^{r}]+[B_{s},L^{r-1}]+[B_{r},L^{s}]+sL^{s+r-1}\\
&+sL^{s+r-2}+[B_{r},L^{s-1}]+(s-1)L^{s+r-2}+(s-1)L^{s+r-3}.\label{prov-2}
\end{split}
\end{equation}
Then \eqref{prov-1} coupled with \eqref{prov-2} yield
\begin{equation}
(L'[\sigma_{s}])'[\sigma_{r}]-(L'[\sigma_{r}])'[\sigma_{s}]=[{\langle}B_{s},B_{r}\rangle,L]+(s-r)(L^{s+r-1}+2L^{s+r-2}+L^{s+r-3}),\label{prov-3}
\end{equation}
which gives \eqref{thr1-c} by using \eqref{com-relation}. Next, noting that
$K_{s},\sigma_{r}\in\mathcal {F}$, i.e.,
$
K_{s}|_{u=0}=\sigma_{r}|_{u=0}=0$,
from \eqref{3.8c} we obtain \eqref{thr1-bon-c} immediately.
We complete the proof.
\end{proof}

With the above theorem in hand, the algebraic relation of flows
$K_{s}$ and $\sigma_{r}$ can be derived.
\begin{Theorem}
\label{T:2}
The flows
$K_{s}$ and $\sigma_{r}$ form a Lie algebra with structure
\begin{subequations}
\begin{eqnarray}
{\llbracket}K_{s},K_{r}{\rrbracket}&=&0,\label{thr2-a}\\
{\llbracket}K_{s},\sigma_{r}{\rrbracket}&=&sK_{s+r-1}+sK_{s+r-2},\label{thr2-b}\\
{\llbracket}\sigma_{s},\sigma_{r}{\rrbracket}&=&(s-r)(\sigma_{s+r-1}+\sigma_{s+r-2}),\label{thr2-c}
\end{eqnarray}
\end{subequations}
where $s,r\geq 1$ and we set $K_0=\sigma_0=0$.
\end{Theorem}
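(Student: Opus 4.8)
The plan is to read Lemma~1 as a uniqueness principle: if $X\in\mathcal{F}$ and a difference operator $B$ of the form \eqref{Br} with $B|_{u=0}=0$ satisfy $L'[X]=[B,L]$, then $X=0$. For each of the three brackets I would transfer the conjectured right-hand side to the left and show that the resulting pair $(X,B)$ meets exactly these hypotheses, so that Lemma~1 forces the bracket identity. Throughout, $\llbracket K_s,K_r\rrbracket$, $\llbracket K_s,\sigma_r\rrbracket$ and $\llbracket\sigma_s,\sigma_r\rrbracket$ lie in $\mathcal{F}$ because $\mathcal{F}$ is closed under the Gateaux commutator, and each $B$ produced below is a genuine difference operator of the form \eqref{Br}, since it is assembled from $A_m,B_m$ and their Gateaux derivatives and commutators; this is precisely what Lemma~1 requires.

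For \eqref{thr2-a} the work is already done by Theorem~1: \eqref{thr1-a} gives $L'[\llbracket K_s,K_r\rrbracket]=[\langle A_s,A_r\rangle,L]$ and \eqref{thr1-bon-a} gives $\langle A_s,A_r\rangle|_{u=0}=0$, so Lemma~1 applies with $X=\llbracket K_s,K_r\rrbracket$ and yields $X=0$ at once. For \eqref{thr2-b} I would combine \eqref{thr1-b} with the isospectral Lax representation \eqref{Lax-exp11}, which gives $L'[sK_{s+r-1}+sK_{s+r-2}]=[sA_{s+r-1}+sA_{s+r-2},L]$. Subtracting produces $L'[X]=[B,L]$ with $X=\llbracket K_s,\sigma_r\rrbracket-sK_{s+r-1}-sK_{s+r-2}$ and $B=\langle A_s,B_r\rangle-sA_{s+r-1}-sA_{s+r-2}$; by \eqref{thr1-bon-b} and \eqref{Lax-exp12} one finds $B|_{u=0}=s\Delta^{s+r-1}+s\Delta^{s+r-2}-s\Delta^{s+r-1}-s\Delta^{s+r-2}=0$, and Lemma~1 closes the case.

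The identity \eqref{thr2-c} is the one where the inhomogeneous terms matter. Here I would feed the non-isospectral Lax representation \eqref{Lax-exp21}, namely $L'[\sigma_m]=[B_m,L]+L^m+L^{m-1}$, into the combination $(s-r)(\sigma_{s+r-1}+\sigma_{s+r-2})$; the resulting tail of pure powers of $L$ is $(s-r)\bigl(L^{s+r-1}+2L^{s+r-2}+L^{s+r-3}\bigr)$, which coincides term by term with the inhomogeneity in \eqref{thr1-c}. Subtracting therefore cancels all powers of $L$ and leaves $L'[X]=[B,L]$ with $X=\llbracket\sigma_s,\sigma_r\rrbracket-(s-r)(\sigma_{s+r-1}+\sigma_{s+r-2})$ and $B=\langle B_s,B_r\rangle-(s-r)(B_{s+r-1}+B_{s+r-2})$. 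Finally \eqref{thr1-bon-c} together with \eqref{Lax-exp22} gives $(B_{s+r-1}+B_{s+r-2})|_{u=0}=t_1\Delta^{s+r-1}+(t_1+n)\Delta^{s+r-2}+n\Delta^{s+r-3}$, so $B|_{u=0}=0$ and Lemma~1 gives $X=0$ once more.

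I expect the only delicate point to be the bookkeeping at small indices. Since $s,r\geq1$ the extreme case is $s=r=1$, where $s+r-2=0$ and $s+r-3=-1$; the convention $K_0=\sigma_0=0$ (consistent with $A_0=1$, $[A_0,L]=0$) keeps every right-hand side well defined, while in \eqref{thr2-c} the prefactor $(s-r)$ annihilates exactly the would-be $\sigma_0$ and $L^{-1}$ contributions, so no genuine negative power of $L$ ever appears. Verifying this, and confirming that the coefficient matches (the powers of $L$ in \eqref{thr1-c}, and the boundary operators in \eqref{thr1-bon-b} and \eqref{thr1-bon-c}) line up exactly as claimed, is the whole content of the argument.
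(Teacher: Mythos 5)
Your proposal is correct and follows essentially the same route as the paper: for each bracket you subtract the conjectured right-hand side, use the Lax representations \eqref{2.19}--\eqref{2.20} together with Theorem~1 to reduce to $L'[X]=[B,L]$ with $B|_{u=0}=0$, and invoke Lemma~1 to conclude $X=0$. Your extra attention to the small-index bookkeeping (and your combination $(s-r)(\sigma_{s+r-1}+\sigma_{s+r-2})$, which corrects an evident typo in the paper's definition of $\omega$) only tightens the same argument.
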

\begin{proof} In the light of \eqref{0-equa} only admitting zero
solution, \eqref{thr1-a} coupled with \eqref{thr1-bon-a} possesses
the same property as well, which
means \eqref{thr2-a} holds.

Next, taking
\begin{subequations}
\begin{eqnarray}
\theta&=&{\llbracket}K_{s},\sigma_{r}{\rrbracket}-sK_{s+r-1}-sK_{s+r-2},\\
\tilde{A}&=&{\langle}A_{s},B_{r}{\rangle}-sA_{s+r-1}-sA_{s+r-2},
\end{eqnarray}
\end{subequations}
it then follows from \eqref{thr1-b}, \eqref{thr1-bon-b} and the
isospectral Lax representation \eqref{2.19}
that
\begin{equation}
L'[\theta]=[\tilde{A},L],~~\tilde{A}|_{u=0}=0,
\end{equation}
which has only zero solution $\theta=0$ and $\tilde{A}=0$, and then
means \eqref{thr2-b} is true.

Similarly, taking
\begin{subequations}
\begin{eqnarray}
\omega&=&{\llbracket}\sigma_{s},\sigma_{r}{\rrbracket}-(s-r)(\sigma_{s+r-1}+s\sigma_{s+r-2}),\\
\tilde{B}&=&{\langle}B_{s},B_{r}{\rangle}-(s-r)(B_{s+r-1}+B_{s+r-2}),
\end{eqnarray}
\end{subequations}
and noting that $\tilde{B}|_{u=0}=0$ together with \eqref{2.20},
\eqref{thr1-c} and \eqref{thr1-bon-c},
we then have
\begin{equation}
L'[\omega]=[\tilde{B},L],~~\tilde{B}|_{u=0}=0.
\end{equation}
Hence we get $\omega=0$ and $\tilde{B}=0$, which shows that
\eqref{thr2-c}
is also correct.
Thus we complete the proof.
\end{proof}

Based on Theorem \ref{T:2}, the symmetries and their
algebraic structure for the isospectral D$\Delta$KP hierarchy
$u_{t_{s}}=K_{s}$  can be derived immediately.

\begin{Theorem}
The isospectral D$\Delta$KP hierarchy
$u_{t_{s}}=K_{s}$ can have tow sets of
symmetries,   $K$-symmetries $\{K_{l}\}$ and
$\tau$-symmetries,
$\tau^{s}_{r}=st_{s}K_{s+r-1}+st_{s}K_{s+r-2}+\sigma_{r}$
$(l=1,2,\ldots, ~r=1,2,\ldots)$,
which form a Lie algebra with structure
\begin{subequations}
\label{new-str}
\begin{eqnarray}
{\llbracket}K_{l},K_{r}{\rrbracket}&=&0,\\
{\llbracket}K_{l},\tau^{s}_{r}{\rrbracket}&=&l(K_{l+r-1}+K_{l+r-2}),\\
{\llbracket}\tau^{s}_{l},\tau^{s}_{r}{\rrbracket}&=&(l-r)(\tau^{s}_{l+r-1}+\tau^{s}_{l+r-2}),
\end{eqnarray}
\end{subequations}
where $l,r,s\geq 1$ and we set $K_0=\tau^{s}_{0}=0$.
Especially for the D$\Delta$KP equation \eqref{ddkp}
its symmetries are $K_{l}$ and $\tau_{r}=2tK_{r+1}+2tK_{r}+\sigma_{r}$.
\end{Theorem}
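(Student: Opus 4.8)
The plan is to prove Theorem~3 in two stages. First I would show that each member of the two families is genuinely a symmetry of the fixed equation $u_{t_s}=K_s$, and then I would verify the three closure relations in \eqref{new-str} purely algebraically, using only Theorem~\ref{T:2} together with the bilinearity and antisymmetry of the Gateaux commutator $\llbracket\cdot,\cdot\rrbracket$. Throughout I will use that a $t$-dependent function $\sigma$ is a symmetry of $u_{t_s}=K_s$ exactly when $\partial_{t_s}\sigma=\llbracket K_s,\sigma\rrbracket$, and that since the Gateaux derivative differentiates only the $u$-dependence, the parameters $t_s$ and $n$ factor out of every bracket.

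For the $K$-family the symmetry property is immediate: each $K_l$ carries no explicit $t_s$-dependence, so $\partial_{t_s}K_l=0$, while $\llbracket K_s,K_l\rrbracket=0$ by \eqref{thr2-a}, and the symmetry condition holds. For $\tau^{s}_{r}=st_sK_{s+r-1}+st_sK_{s+r-2}+\sigma_r$ I would compute both sides of the condition separately. On the left, since $\partial_{t_s}K_j=0$ and $\sigma_r$ depends explicitly only on $n$ and $t_1$ (so $\partial_{t_s}\sigma_r=0$ for the relevant $s\ge 2$; note $s=1$ is the trivial identity $u_{t_1}=u_y$), only the explicit $t_s$ in the coefficients survives, giving $\partial_{t_s}\tau^{s}_{r}=sK_{s+r-1}+sK_{s+r-2}$. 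On the right, expanding $\llbracket K_s,\tau^{s}_{r}\rrbracket$ bilinearly and using $\llbracket K_s,K_j\rrbracket=0$ together with $\llbracket K_s,\sigma_r\rrbracket=sK_{s+r-1}+sK_{s+r-2}$ from \eqref{thr2-b} gives the same expression. The two agree, so $\tau^{s}_{r}$ is a symmetry.

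For the closure relations, the first is just \eqref{thr2-a}. The second follows by expanding $\llbracket K_l,\tau^{s}_{r}\rrbracket$ bilinearly: the $K$--$K$ brackets vanish and only $\llbracket K_l,\sigma_r\rrbracket=l(K_{l+r-1}+K_{l+r-2})$ remains. The third relation is the substantial computation. I would expand $\llbracket \tau^{s}_{l},\tau^{s}_{r}\rrbracket$ into four groups: the $K$--$K$ group vanishes by \eqref{thr2-a}; the two mixed $K$--$\sigma$ groups are evaluated by \eqref{thr2-b} and antisymmetry; and the $\sigma$--$\sigma$ group gives $(l-r)(\sigma_{l+r-1}+\sigma_{l+r-2})$ by \eqref{thr2-c}. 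On collecting the mixed terms, the structure constants $s+l-1,\,s+l-2$ against $s+r-1,\,s+r-2$ cancel pairwise so that the pure-$K$ contribution collapses to $st_s(l-r)(K_{s+l+r-2}+2K_{s+l+r-3}+K_{s+l+r-4})$. This is precisely $(l-r)$ times the $t_s$-weighted $K$-part of $\tau^{s}_{l+r-1}+\tau^{s}_{l+r-2}$, so together with the $\sigma$--$\sigma$ term it reassembles into $(l-r)(\tau^{s}_{l+r-1}+\tau^{s}_{l+r-2})$. The D$\Delta$KP statement is then the specialization $s=2$, $t=t_2$.

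I expect the main obstacle to be the bookkeeping in this last step: one must track the four index-shifted families $K_{s+l+r-2},\dots,K_{s+l+r-4}$ and confirm that their coefficients collapse to the symmetric pattern $1,2,1$ with prefactor $st_s(l-r)$, and then recognize that pattern as exactly the $K$-content of $\tau^{s}_{l+r-1}+\tau^{s}_{l+r-2}$. The accompanying conceptual subtlety is the explicit time dependence of $\sigma_r$: the coefficients $st_s$ in $\tau^{s}_{r}$ are engineered so that their $t_s$-derivative reproduces precisely $\llbracket K_s,\sigma_r\rrbracket$, and checking this cancellation (in particular that $\partial_{t_s}\sigma_r=0$ for $s\ge 2$) is what upgrades $\tau^{s}_{r}$ from an element merely closing the bracket relations to a genuine symmetry.
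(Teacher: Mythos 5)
Your proposal is correct and follows exactly the route the paper intends: the paper gives no explicit proof of this theorem, asserting only that it follows ``immediately'' from Theorem 2, and your verification (the symmetry criterion $\partial_{t_s}\sigma=\llbracket K_s,\sigma\rrbracket$ plus bilinear expansion of the brackets using the flow algebra) is precisely the standard derivation being alluded to, carried out correctly including the $1,2,1$ collapse of the mixed $K$-terms and the caveat about the explicit $t_1$-dependence of $\sigma_r$.
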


We end up this section by the following two remarks.
First, the new time-dependence \eqref{eta-t} of the spectral parameter $\eta$
leads to the new algebra structure \eqref{new-str},
which is different from the centreless Kac-Moody-Virasoro algebra (cf.\cite{Gungor-06}) of the
D$\Delta$KP equation given in \cite{S-DOC-1998},
and also different from the centreless Kac-Moody-Virasoro algebra of the KP hierarchy obtained in \cite{CHEN-CHAOS-2003}.
Besides, $\{K_1,K_2,\tau^{s}_{1}\}$ compose a subalgebra. This agrees with the symmetry algebra of the
D$\Delta$KP equation obtained in \cite{Zhang-09} where $\tau^{2}_{1}$
provides an invariability for the D$\Delta$KP equation under a combined Galilean-scalar transformation,
and now $\tau^{2}_{1}$ has got its clear context in the Lax representation approach.
The second remark is on the relation between the  D$\Delta$KP equation and the KP equation.
In fact, the D$\Delta$KP equation was originally proposed by Date, et.al.\cite{Date-dis-II}. It was derived from
a bilinear identity (discretized by partially imposing Miwa's transformation on continuous exponential functions) which is related to the KP hierarchy.
However, since in the discrete exponential the discrete variables (eg. $n,m,l$) appear symmetrically and do not represent dispersion relation as in the continuous one,
therefore there are (sometimes complicated) variable combination and transformation involved in
the continuous limit procedure.
There have been many results on the D$\Delta$KP equation, such as bilinear form\cite{Date-dis-II},
Sato's approach\cite{S-1997}-\cite{S-DOC-1998}, Casoratian solutions\cite{Zhang-09,S-JPA}, gauge transformation and
double Casoratian solutions\cite{He-09}, and also symmetries in the present paper.
The relations between these results and those of the KP equation will be investigated in detail elsewhere in terms of continuous limit.

\section{Conclusion}

In this paper, by introducing suitable time-dependence $\eta_{t_{r}}=\eta^{r}+\eta^{r-1}$
for the spectral parameter $\eta$, we obtained non-isospectral D$\Delta$KP flows $\{\sigma_r\}$
which satisfy $\sigma_r|_{u=0}=0$.
This enables us to construct $K$-symmetries  and
$\tau$-symmetries for the isospectral D$\Delta$KP hierarchy through
the Lax representation approach. The obtained symmetries are proved to
form a Lie algebra.


\section*{Acknowledgments}
This project is supported by the National Natural Science Foundation
of China (10671121) and Shanghai Leading Academic Discipline
Project (No.J50101).

\small

\end{document}